\newtheorem{theorem}{Theorem}
\newtheorem{prop}{Proposition}
\newtheorem{corollary}[prop]{Corollary}
\theoremstyle{definition}
\theoremstyle{remark}
\long\def\symbolfootnote[#1]#2{\begingroup%
\def\thefootnote{\fnsymbol{footnote}}\footnote[#1]{#2}\endgroup}
\newcommand{\nablash}{\nabla{\kern -.75 em
     \raise 1.5 true pt\hbox{{\bf/}}}\kern +.1 em}
\newcommand{\Deltash}{\Delta{\kern -.69 em
     \raise .2 true pt\hbox{{\bf/}}}\kern +.1 em}
\newcommand{\Rslash}{R{\kern -.60 em
     \raise 1.5 true pt\hbox{{\bf/}}}\kern +.1 em}
\newcommand{\Tr}{\operatorname{Tr}}
\newcommand{\D}{\partial}
\renewcommand{\div}{\operatorname{div}}
\newcommand{\RN}{Reissner-Nordstr\"om}
\begin{document}
\title[The Riemannian Penrose inequality with charge]{On the Riemannian Penrose inequality with charge and the cosmic
censorship conjecture}
\subjclass[2000]{Primary 83C57, Secondary 58J05}

\author{Marcus A.~Khuri, Gilbert Weinstein, and Sumio Yamada}
\address{Department of Mathematics\\ Stony Brook University\\ Stony Brook, NY 11794}
\email{khuri@math.sunysb.edu}
\address{Physics Dept. and Dept.\ of  Computer Science and Mathematics\\ Ariel University of Samaria\\ Ariel 40700,
Israel}
\email{gilbert.weinstein@gmail.com}
\address{Department of Mathematics\\Gakushuin University\\ Tokyo 171-8588, Japan}
\email{yamada@math.gakushuin.ac.jp}
\thanks{M.~A.~Khuri was supported by NSF grant DMS-1007156.}
\thanks{Part of this paper was discussed while the authors were attending the RIMS workshop \emph{Geometry of
Moduli Spaces for Low Dimensional Manifolds} at Kyoto University in October 2012. The authors would like to thank the
RIMS for its hospitality.}

\begin{abstract}
We note an area-charge inequality originally due to Gibbons: if the outermost horizon $S$ in an asymptotically flat
electrovacuum initial data set is connected then $|q|\leq r$, where $q$ is the total charge and $r=\sqrt{A/4\pi}$ is
the area radius of $S$. A consequence of this inequality, in conjunction with the positive mass theorem with charge, is
that for connected black holes the following lower bound on
the area holds: $r\geq m-\sqrt{m^2-q^2}$. When combined with the upper bound $r\leq m + \sqrt{m^2-q^2}$ which is
expected to hold always, this implies the natural generalization of the Riemannian Penrose inequality: $m\geq
1/2(r+q^2/r)$. We also establish the same lower bound without the assumption of time symmetry.
\end{abstract}

\maketitle






A natural generalization of the Riemannian Penrose inequality incorporating electric charge is:
\begin{equation} 	\label{CRPI}
  m \geq \frac12\left(r + \frac{q^2}{r}\right),
\end{equation}
with equality if and only if the data is \RN. Here $m$ is the ADM mass, $r=\sqrt{A/16\pi}$ is
the area radius of the outermost horizon, and $q$ is the total charge. This inequality is
known to hold when the outermost horizon is connected, but could be violated otherwise. The proof of
the inequality for a connected horizon follows the inverse mean curvature flow argument of
Huisken-Ilmanen~\cite{huisken-ilmanen}, with a minor modification based on an earlier observation of Jang~\cite{jang}.
The rigidity statement was recently proved by Khuri and Disconzi~\cite{disconzi-khuri}.
The inequality~\eqref{CRPI} can fail when the outermost
horizon is not connected, as shown in~\cite{weinstein-yamada}. Indeed, the area of the cross-section of the necks in a
Majumdar-Papapetrou solution with two bodies violates~\eqref{CRPI}, and while this solution is not asymptotically flat,
and does not contained a compact minimal surface, these deficiencies were corrected in~\cite{weinstein-yamada} by gluing
using a perturbation argument two identical copies along the necks. As observed already by Jang, inequality~\eqref{CRPI}
is equivalent to the two inequalities:
\begin{equation} \label{upper-lower-bound}
  m-\sqrt{m^2-q^2}\leq r \leq m+\sqrt{m^2-q^2}.
\end{equation}
We note that $|q|\leq m$ follows from the positive mass theorem with charge~\cite{GHHP}.

The upper bound on $r$ in~\eqref{upper-lower-bound} is suggested by cosmic censorship, via a heuristic argument of
Penrose.  If the data violates this inequality, and if the evolution is smooth enough, then one expects the area of the
horizon to be non-decreasing while the other parameters are constant. Hence the inequality will also be violated in
the limit of late times, in contradiction to all the known stationary solutions without naked
singularities. In a future paper, we prove this upper bound using a modification of the conformal flow used by H.~Bray.
An outline of the proof will appear in~\cite{khuri-weinstein-yamada}.

The counter-example constructed in~\cite{weinstein-yamada} violates the lower bound in~\eqref{upper-lower-bound}, and
one might say that this is not so surprising  since there is no physical motivation for
this lower bound. On the other hand, as Robert Wald~\cite{wald} pointed out, the fact that the lower
bound is always satisfied when the outermost horizon is connected might be surprising, and indeed no physical
motivation has so far been proposed for this inequality. In this short note, we show that in this case,
the lower bound in fact follows from an inequality
first proved by Gibbons~\cite{gibbons} using the stability of the outermost horizon as an area minimizing surface.
Thus, while Penrose's heuristic argument based on cosmic censorship provides a physical
justification for the upper bound on $r$, our Corollary~1 below shows that the positive mass theorem with charge  in
conjunction with Gibbons' inequality provides a physical justification for the lower
bound on $r$ when the outermost horizon is connected.

We begin by very briefly introducing a few definitions. An initial data set $(M,g,E)$
consists of a $3$-manifold
$M$, a Riemannian metric $g$, and a vector field $E$.
\symbolfootnote[2]{For simplicity, we first assume that the magnetic field vanishes.}
We assume that the data satisfies the
Maxwell constraint $\div_g E=0$, and
the dominant energy condition $R\geq 2|E|^2$, where $R$ is the scalar curvature of $g$. We assume that the data is
strongly asymptotically flat meaning that there is a compact set $K$ such that $M\setminus K$ is the finite union of
disjoint ends, and on each end the fields decay according to:
\[
  g-\delta = O(|x|^{-1}), \qquad E = O(|x|^{-2}).
\]
In addition we assume that $R_g$ is integrable. This guarantees that the ADM mass and total charge
\[
  m = \frac1{16\pi} \int_{S_\infty} (g_{ij,j}-g_{jj,i}) \nu^i\, dA, \qquad
  q = \frac1{4\pi} \int_{S_\infty} E_i \nu^i\, dA
\]
are well defined. Here, $\nu$ is the outer unit normal, and the limit is taken in a designated end. Conformally
compactifying all but the designated end, we
can now restrict our attention to surfaces which bound compact regions, and define $S_2$ to enclose $S_1$ to mean
$S_1=\D K_1$, $S_2=\D K_2$ and $K_1\subset K_2$. An \emph{outermost horizon} is a compact minimal surface not enclosed
in any other compact minimal surface.

A version of the following theorem was proved in~\cite[Section 6]{gibbons}. The main difference is that instead of
stability we assume the outermost condition, which then in turn implies stability. The outermost condition is a natural
condition appearing in statements of the Penrose inequality. We bring it here for completeness and because the proof is
very simple.

\begin{theorem} \label{area-charge}
Let $(M,g,E)$ be strongly asymptotically flat, satisfying $R\geq 2|E|^2$ and $\div_g E=0$, and suppose the outermost
horizon $S$ is connected. Then
\[
  |q| \leq r.
\]
\end{theorem}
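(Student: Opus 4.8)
The plan is to run the stability argument of Gibbons, the novelty being that the outermost hypothesis on $S$ is used to supply the stability of $S$ that is otherwise postulated. Recall that $S$, being an outermost horizon, is a closed minimal surface, so along any normal variation with velocity $f\nu$ ($\nu$ the unit normal) one has second variation of area $\int_S(|\nabla f|^2-(|A|^2+\Ric(\nu,\nu))f^2)\,dA$, and ``$S$ is stable'' means this is nonnegative for every $f$.

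The first, and only substantial, step is to show that an outermost minimal surface is stable. I would do this by minimizing area among all closed surfaces enclosing $S$, working in the compact region exterior to $S$ obtained after conformally compactifying the non-designated ends; strong asymptotic flatness provides barrier surfaces near the designated end that confine minimizing sequences to a compact region, so by the regularity theory for area minimizers in ambient dimension three a smooth minimizer $S^*$ exists and is itself a minimal surface. The strong maximum principle then prevents $S^*$ from being tangent to $S$ from the outside unless $S^*=S$, so $S^*$ is either $S$ or a compact minimal surface strictly enclosing $S$; the latter is excluded by the outermost hypothesis, hence $S^*=S$, and in particular $S$ is area-outer-minimizing and therefore stable. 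This geometric-measure-theory-plus-maximum-principle step is where essentially all the work lies; everything after it is the classical computation.

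Given stability, insert $f\equiv1$ (admissible since $S$ is closed) to get $\int_S(|A|^2+\Ric(\nu,\nu))\,dA\le0$. The Gauss equation for a minimal surface in a $3$-manifold, $2K=R-2\Ric(\nu,\nu)-|A|^2$ (with $K$ the Gauss curvature and $R=R_g$), turns this into $\int_S(\tfrac12 R+\tfrac12|A|^2-K)\,dA\le0$; since $S$ is connected, Gauss--Bonnet gives $\int_S K\,dA=2\pi\chi(S)\le4\pi$, whence $\int_S R\,dA\le8\pi$ and then, by the hypothesis $R\ge2|E|^2$, $\int_S|E|^2\,dA\le4\pi$. Finally, because $\div_g E=0$ throughout $M$, the divergence theorem on the region between $S$ and $S_\infty$ gives $\int_S\langle E,\nu\rangle\,dA=4\pi q$, and Cauchy--Schwarz yields
\[
16\pi^2 q^2=\Bigl(\int_S\langle E,\nu\rangle\,dA\Bigr)^2\le|S|\int_S|E|^2\,dA\le4\pi|S|,
\]
so $q^2\le|S|/4\pi=r^2$, which is the assertion.
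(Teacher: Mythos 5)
Your proposal is correct and follows essentially the same route as the paper: outermost $\Rightarrow$ outer-minimizing $\Rightarrow$ stable, then the second variation with $f\equiv 1$, the Gauss equation, Gauss--Bonnet, and Cauchy--Schwarz combined with the flux identity $\int_S\langle E,\nu\rangle\,dA=4\pi q$. The only difference is that you supply the geometric-measure-theory argument for the outer-minimizing property, which the paper simply cites from the literature.
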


\begin{proof}
We begin by pointing out that $S$ is in fact outer minimizing, meaning
that it has area no greater than any other surface which encloses it, see for example~\cite[pg
706]{weinstein-yamada}. Thus, $S$ is a stable minimal surface and from the second variation of area we get:
\[
  0\leq\int_S-(|\chi|^2+R_{\nu\nu})\, dA =\int_S \left( \kappa - \frac{1}{2}|\chi|^{2}-\frac{1}{2}R \right)\, dA,
\]
where $\chi$ is the second fundamental form of $S$ in $M$, $\nu$ is the outward unit normal, and $\kappa$ is the Gauss
curvature of $S$. The second equality follows from the Gauss equation. Since $S$ is connected, it follows from
Gauss-Bonnet that $\int_S \kappa\, dA = 4\pi$ hence
\[
    4\pi \geq \int_S \frac{1}{2}R\,dA \geq \int_S |E\cdot\nu|^2\, dA
    \geq \frac1{4\pi r^2} \left( \int_S E\cdot\nu\,dA \right)^2
    = \frac{(4\pi q)^2}{4\pi r^2}.
\]
\end{proof}

\begin{corollary}
Under the same hypotheses as Theorem~\ref{area-charge}, we have:
\[
  r \geq m - \sqrt{m^2-q^2}.
\]
\end{corollary}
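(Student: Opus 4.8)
The plan is to deduce the bound from just two ingredients: Theorem~\ref{area-charge}, which gives $r\geq|q|$, and the positive mass theorem with charge $|q|\leq m$ (cited from~\cite{GHHP} in the introduction), which in particular guarantees that $m^2-q^2\geq 0$ so that the right-hand side $m-\sqrt{m^2-q^2}$ is a well-defined nonnegative real number. First I would record the purely algebraic fact that for any reals $0\leq a\leq b$ one has $a\geq b-\sqrt{b^2-a^2}$: since $0\leq b-a\leq b+a$, multiplying yields $(b-a)^2\leq(b-a)(b+a)=b^2-a^2$, hence $b-a\leq\sqrt{b^2-a^2}$, which rearranges to the claim.

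Applying this with $a=|q|$ and $b=m$ --- permissible exactly because $|q|\leq m$ --- gives $|q|\geq m-\sqrt{m^2-q^2}$. Chaining this with the inequality $r\geq|q|$ from Theorem~\ref{area-charge} produces $r\geq|q|\geq m-\sqrt{m^2-q^2}$, which is the assertion. Equivalently, one can invoke the equivalence recalled in~\eqref{upper-lower-bound}: the estimate $m\geq\frac12(|q|+q^2/|q|)=|q|$ already places the value $|q|$ in the interval on which $\rho\mapsto\frac12(\rho+q^2/\rho)$ stays below $m$, so $|q|$ dominates the lower endpoint, and $r\geq|q|$ then keeps $r$ from dropping below it; the case $q=0$ is immediate since then the lower endpoint is $0\leq r$.

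I do not expect any genuine obstacle here: the corollary is a one-line algebraic consequence of Theorem~\ref{area-charge} together with the charged positive mass theorem. The only point requiring any care is checking that $q^2\leq m^2$ so that the square root is defined, and this is precisely the content of the positive mass inequality with charge; it is the reason the hypotheses of strong asymptotic flatness and the dominant energy condition $R\geq 2|E|^2$ must be in force.
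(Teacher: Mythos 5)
Your argument is correct and is essentially the paper's own proof: the paper's one-line chain $m=\sqrt{q^2+(m^2-q^2)}\leq|q|+\sqrt{m^2-q^2}\leq r+\sqrt{m^2-q^2}$ is exactly your elementary inequality $|q|\geq m-\sqrt{m^2-q^2}$ combined with $r\geq|q|$ from Theorem~\ref{area-charge}, relying on the charged positive mass theorem $|q|\leq m$ for the square root to be real. Your explicit justification of that last point (which the paper leaves implicit, having noted $|q|\leq m$ only in the introduction) is a welcome bit of added care, but the route is the same.
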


\begin{proof}
\begin{equation}\label{3}
  m = \sqrt{q^2+m^2-q^2} \leq |q| + \sqrt{m^2-q^2} \leq r + \sqrt{m^2-q^2}.
\end{equation}
This proves the lower bound on $r$.
\end{proof}

Once the lower bound and the upper bound $r\leq m+\sqrt{m^2-q^2}$ both hold, inequality~\eqref{CRPI} follows. However,
we point out that in fact we do not know of any independent proof of this upper bound.

We end by generalizing these observations in two ways, namely by including the magnetic field and
removing the assumption of time symmetry. In this regard, consider the general initial data set $(M, g, k, E, B)$, where
$k$ is a symmetric 2-tensor representing the extrinsic curvature of $M$ in spacetime, and $B$ is a vector field
representing the magnetic field. For strong asymptotic flatness, we require these quantities to
satisfy the following fall-off conditions at spatial infinity
\[
  k=o(|x|^{-2}), \qquad
  B=O(|x|^{-2}).
\]
Similarly to the above, the total charges are now given by
\[
  q_{e} = \frac1{4\pi} \int_{S_\infty} E_i \nu^i\, dA, \qquad
  q_{b} = \frac1{4\pi} \int_{S_\infty} B_i \nu^i\, dA,
\]
and the matter and current densities for the non-electromagnetic matter fields are given by
\begin{align*}
 2 \mu  & = R + (\Tr k)^2 - |k|^2 - 2(|E|^2+|B|^{2}), \\
 J & = \div_{g} (k - (\Tr k)g)+2E\times B.
\end{align*}
The following is a Corollary of Theorem~2.1 in~\cite{dain}.

\begin{corollary} \label{area-charge1}
Let $(M,g,k,E,B)$ be strongly asymptotically flat, satisfying $\mu\geq|J|$ and $\div_g E=\div_g B=0$, and suppose the outermost
apparent horizon $S$ is connected. Then
\begin{equation} \label{area-charge-non-sym}
  r \geq m-\sqrt{m^2-q_e^2-q_b^{2}}.
\end{equation}
\end{corollary}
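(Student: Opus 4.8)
The plan is to reproduce the derivation of Corollary~1, with its two time-symmetric ingredients replaced by their analogues for apparent horizons. Write $q^2=q_e^2+q_b^2$. The two ingredients are: (a) the area--charge inequality, $r\geq\sqrt{q_e^2+q_b^2}$, for the connected outermost apparent horizon $S$, the non-time-symmetric analogue of Theorem~\ref{area-charge}, discussed below; and (b) the positive mass theorem with charge, $m\geq\sqrt{q_e^2+q_b^2}$ — in the time-symmetric case this is~\cite{GHHP}, and for general (not necessarily time-symmetric) data satisfying $\mu\geq|J|$ and $\div_g E=\div_g B=0$ it follows from the same Witten-type spinor argument, now with a supercovariant connection incorporating the electromagnetic field. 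Granting (a) and (b) we have $m^2-q^2\geq0$, and exactly as in~\eqref{3},
\[
  m=\sqrt{q^2+(m^2-q^2)}\;\leq\;\sqrt{q^2}+\sqrt{m^2-q^2}\;\leq\; r+\sqrt{m^2-q^2},
\]
using $\sqrt{a+b}\leq\sqrt a+\sqrt b$ with $a=q^2\geq0$, $b=m^2-q^2\geq0$; rearranging gives~\eqref{area-charge-non-sym}.

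To obtain (a) I would invoke Theorem~2.1 of~\cite{dain}, whose hypothesis is that $S$ be a connected, stable marginally outer trapped surface. Two remarks bridge the gap to the hypotheses here. First, just as an outermost minimal surface is outer minimizing and hence stable — the fact used at the start of the proof of Theorem~\ref{area-charge} — an outermost apparent horizon is stably outermost as a MOTS, which is precisely the stability notion appearing in Dain's theorem; and connectedness provides the genus-zero (Gauss--Bonnet) input needed in Dain's argument, in parallel with the step $\int_S\kappa\,dA=4\pi$ in the proof of Theorem~\ref{area-charge}. Second, the conclusion of Theorem~2.1 of~\cite{dain} bounds the area of $S$ from below by the fluxes $\tilde q_e$, $\tilde q_b$ of $E$ and $B$ across $S$; since $\div_g E=\div_g B=0$ throughout $M$ and $E,B=O(|x|^{-2})$, the divergence theorem — exactly as in the proof of Theorem~\ref{area-charge} — identifies $\tilde q_e=q_e$ and $\tilde q_b=q_b$, so that the inequality becomes $r\geq\sqrt{q_e^2+q_b^2}$, which is (a).

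The main obstacle is not the final estimate, which is the two-line computation above, but confirming that the hypotheses stated here really reduce to those of Theorem~2.1 of~\cite{dain}: that the outermost-apparent-horizon hypothesis implies Dain's stable-MOTS hypothesis (nonnegativity of the principal eigenvalue of the MOTS stability operator, the analogue for apparent horizons of the outer-minimizing property used in Theorem~\ref{area-charge}), and that connectedness yields the genus-zero input that Dain's proof requires. A secondary point worth recording is that the positive mass theorem with charge is indeed available for non-time-symmetric, dyonic data under $\mu\geq|J|$ with $J$ as defined above. Once these are in place, (a) and (b) combine verbatim as in the proof of Corollary~1.
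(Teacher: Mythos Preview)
Your proposal is correct and follows essentially the same route as the paper: invoke stability of the outermost apparent horizon to apply Theorem~2.1 of~\cite{dain} for the area--charge inequality, combine with the positive mass theorem with charge, and finish with the same elementary estimate~\eqref{3}. Your additional remarks on identifying the fluxes through $S$ with the total charges via the divergence theorem, and on the MOTS-stability and genus-zero points, are more explicit than the paper's terse treatment but do not represent a different approach.
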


\begin{proof}
As $S$ is outermost, it is stable, and hence Theorem 2.1 in~\cite{dain} implies the area-charge inequality
\[
  \sqrt{q_e^2+q_b^2} \leq r.
\]
Under the current assumptions the positive mass theorem with charge yields
\[
  m\geq\sqrt{q_e^2+q_b^2}.
\]
These two inequalities combine as in \eqref{3} to give
\[
 m \leq \sqrt{q_e^2+q_b^2} + \sqrt{m-q_e^2-q_b^2} \leq r + \sqrt{m-q_e^2-q_b^2}
\]
and~\eqref{area-charge-non-sym} follows.
\end{proof}

\section*{Acknowledgement}
The authors would like to thank Sergio Dain, Jos\'{e} Luis Jaramillo, Marc Mars, and Mart\'{\i}n Reiris for their comments and
suggestions.

\bibliographystyle{amsplain}

\begin{thebibliography}{99}

\newcommand{\aut}[1]{{\sc #1},}
\newcommand{\tit}[1]{{\em #1\/},}
\newcommand{\vol}[1]{{\bf #1}}
\newcommand{\yr}[1]{(#1)}
\newcommand{\pp}[2]{#1--#2}


\bibitem{bray}
    \aut{H.~Bray}
    \tit{Proof of the Riemannian Penrose conjecture using the positive mass theorem}
    J.\ Differential Geom.\
    \vol{59}
    \yr{2001},
    \pp{177}{267}.
    
\bibitem{dain}
\aut{S. Dain, J. Jaramillo, and M. Reiris}
\tit{Area-charge inequality for black holes}
 Class.\ Quantum Grav.\
 \vol{29}
 \yr{2012},
 p.~035013

\bibitem{disconzi-khuri}
    \aut{M.~M.~Disconzi and M.~A,~Khuri}
    \tit{On the Penrose inequality for charged black holes}
    Class.\ Quantum Grav.\
    \vol{29}
    \yr{2012}
    \href{http://arxiv.org/abs/1207.5484}{arxiv.org/abs/1207.5484},
    p.~245019.

\bibitem{gibbons}
   \aut{G.~W.~Gibbons}
   \tit{Some comments on gravitational entropy and the inverse mean curvature flow}
   Classical Quantum Gravity
   \vol{16}
   \yr{1999}, no.~6,
   \pp{1677}{1687}.

\bibitem{GHHP}
    \aut{G.~W.~Gibbons, S.~Hawking, G.~Horowitz and M.~Perry}
    \tit{Positive mass theorem for black holes}
    Commun.\ Math.\ Phys.\
    \vol{88}
    \yr{1983},
    \pp{295}{308}.

\bibitem{huisken-ilmanen}
    \aut{G.~Huisken and T.~Ilmanen}
    \tit{The Inverse Mean Curvature Flow and the Riemannian Penrose
    Inequality}
    J.\ Differential Geom.\
    \vol{59}
    \yr{2001},
    \pp{353}{437}.

\bibitem{khuri-weinstein-yamada}
     \aut{M.~Khuri, G.~Weinstein and S.~Yamada}
     \tit{The Riemannian Penrose inequality with charge for multiple black holes}
     in preparation.

\bibitem{jang}
    \aut{P.~S.~Jang}
    \tit{Note on Cosmic Censorship}
    Phys.\ Rev.\ D.\
    \vol{20}
    \yr{1979} No.~4,
    \pp{834}{838}.

\bibitem{wald}
    \aut{R.~M.~Wald}
    Private communication.

\bibitem{weinstein-yamada}
    \aut{G.~Weinstein and S.~Yamada}
    \tit{On a Penrose Inequality with Charge}
    Comm.\ Math.\ Phys.\
    \vol{257}
    \yr{2005} No.~3,
    \pp{703}{723}.

\end{thebibliography}

\end{document}